\newtheorem{Thm}{Theorem}
\newtheorem{Cor}{Corollary}
\newtheorem{Prop}{Proposition}
\theoremstyle{definition}
\newcommand{\bra}[1]{{\left\langle #1 \right|}}
\newcommand{\ket}[1]{{\left| #1 \right\rangle}}
\newcommand{\T}{\mbox{$\mathrm{tr}$}}
\begin{document}
\title{Generalized entanglement constraints in multi-qubit systems in terms of Tsallis entropy}
\author{Jeong San Kim}
\email{freddie1@khu.ac.kr} \affiliation{
 Department of Applied Mathematics and Institute of Natural Sciences, Kyung Hee University, Yongin-si, Gyeonggi-do 446-701, Korea
}
\date{\today}

\begin{abstract}
We provide generalized entanglement constraints in multi-qubit systems in terms of Tsallis entropy.
Using quantum Tsallis entropy of order $q$, we first provide a generalized monogamy inequality of multi-qubit entanglement
for $q=2$ or $3$. This generalization encapsulates multi-qubit CKW-type inequality as a special case. We further provide a generalized polygamy
inequality of multi-qubit entanglement in terms of Tsallis-$q$ entropy for $1 \leq q \leq2$ or $3 \leq q \leq 4$, which also
contains the multi-qubit polygamy inequality as a special case.
\end{abstract}

\pacs{
03.67.Mn,  
03.65.Ud 
}
\maketitle

\section{Introduction}

Quantum Tsallis entropy is a one-parameter generalization of
von Neumann entropy with respect to a nonnegative real parameter $q$~\cite{tsallis, lv}.
Tsallis entropy is used in many areas of quantum
information theory; Tsallis entropy can be used to characterize
classical statistical correlations inherented in quantum states~\cite{rr},
and it provides some conditions for separability of quantum
states~\cite{ar,tlb,rc}. There are also discussions about using the non-extensive statistical
mechanics to describe quantum entanglement in terms of Tsallis entropy~\cite{bpcp}.

As a function defined on the set of density matrices, Tsallis entropy is concave for all $q > 0$, which
plays an important role in quantum entanglement theory.
Because the concavity of Tsallis entropy assures the property of {\em entanglement monotone}~\cite{vidal},
it can be used to construct a faithful entanglement measure, which does not increase under
{\em local quantum operations and classical communication} (LOCC).

One distinct property of quantum entanglement from other classical correlations
is that multi-party entanglement cannot be freely shared among the parties.
This restricted shareability of entanglement in multi-party quantum systems is known as
{\em monogamy of entanglement}(MoE)~\cite{T04, KGS}.
MoE is a key ingredient for secure quantum cryptography~\cite{rg,m},
and it also plays an important role in condensed-matter physics such as the $N$-representability problem for
fermions~\cite{anti}.

Using {\em concurrence}~\cite{ww} as a bipartite entanglement measure,
Coffman-Kundu-Wootters(CKW) provided a mathematical characterization of MoE in three-qubit systems as an inequality~\cite{ckw},
which was generalized for arbitrary multi-qubit systems~\cite{ov}.
As a dual concept of MoE, a {\em polygamy} inequality of multi-qubit entanglement
was established in terms of {\em Concurrence of Assistance}(CoA).
Later, it was shown that the monogamy and polygamy inequalities of multi-qubit entanglement can also be established
by using other entropy-based entanglement measures such as R\'enyi, Tsallis and unified entropies~\cite{KSRenyi, KimT, KSU}.

Recently, a different kind of monogamous relation in multi-qubit entanglement was proposed by using
concurrence and CoA~\cite{ZF15}. Whereas the CKW-type monogamy inequalities of multi-qubit entanglement provide a lower bound
of bipartite entanglement between one qubit subsystem and the rest qubits in terms of two-qubit entanglement, the new kind of
monogamy relations in~\cite{ZF15} provide bounds of bipartite entanglement between a two-qubit subsystem and the rest in multi-qubit systems
in terms of two-qubit concurrence and CoA.

Here, we provide generalized entanglement constraints in multi-qubit systems in terms of
Tsallis entropy for a selective choice of the real parameter $q$.
Using quantum Tsallis entropy of order $q$, namely {\em Tsallis-$q$ entropy},
we first show that the CKW-type monogamy inequality of multi-qubit entanglement
can have a generalized form for $q=2$ or $3$. This generalized monogamy inequality encapsulates
multi-qubit CKW-type monogamy inequality as a special case. We further provide a generalized polygamy inequality
of multi-qubit entanglement in terms of Tsallis-$q$ entropy for $1 \leq q \leq2$ or $3 \leq q \leq 4$, which also
contains multi-qubit polygamy inequality as a special case.

This paper is organized as follows. In Sec.~\ref{Subsec:
definition}, we recall the definition of Tsallis-$q$ entropy, and the bipartite entanglement measure based on
Tsallis entropy, namely Tsallis-$q$ entanglement as well as its dual quantity, Tsallis-$q$ entanglement of assistance(TEoA).
In Sec.~\ref{Subsec: 2formula}, we review the analytic evaluations of Tsallis-$q$ entanglement and TEoA in two-qubit systems based on their functional
relations with concurrence, and we further review the monogamy and polygamy inequalities of multi-qubit entanglement in terms of Tsallis-$q$ entanglement and TEoA
in Sec.~\ref{Sec: monopoly}.
In Sec.~\ref{Sec: gmonoT}, we provide generalized monogamy and polygamy inequalities of multi-qubit entanglement in terms of Tsallis-$q$ entanglement and TEoA,
and we summarize our results in Sec.~\ref{Conclusion}.


\section{Tsallis-$q$ Entanglement}
\label{Sec: Tqentanglement}
\subsection{Definition}
\label{Subsec: definition}

Using a generalized logarithmic function with respect to the parameter $q$,
\begin{eqnarray}
\ln _{q} x &=&  \frac {x^{1-q}-1} {1-q},
\label{qlog}
\end{eqnarray}
quantum Tsallis-$q$ entropy for a quantum state $\rho$ is defined as
\begin{align}
S_{q}\left(\rho\right)=-\T \rho ^{q} \ln_{q} \rho = \frac {1-\T\left(\rho ^q\right)}{q-1}
\label{Tqent}
\end{align}
for $q > 0,~q \ne 1$~\cite{lv}.
Although the quantum Tsallis-$q$ entropy has a singularity at $q=1$,
it converges to von Neumann entropy when $q$ tends to $1$~\cite{S_1},
\begin{equation}
\lim_{q\rightarrow 1}S_{q}\left(\rho\right)=-\T\rho \ln \rho=S\left(\rho\right).
\end{equation}

Based on Tsallis-$q$ entropy, a class of bipartite entanglement measures was introduced;
for a bipartite pure state $\ket{\psi}_{AB}$ and each $q>0$,
its {\em Tsallis-$q$ entanglement}~\cite{KimT} is
\begin{equation}
{\mathcal T}_{q}\left(\ket{\psi}_{A|B} \right)=S_{q}(\rho_A),
\label{TEpure}
\end{equation}
where $\rho_A=\T _{B} \ket{\psi}_{AB}\bra{\psi}$ is the reduced
density matrix of $\ket{\psi}_{AB}$ onto subsystem $A$. For a bipartite mixed state $\rho_{AB}$,
its Tsallis-$q$ entanglement is defined via convex-roof extension,
\begin{equation}
{\mathcal T}_{q}\left(\rho_{A|B} \right)=\min \sum_i p_i {\mathcal T}_{q}(\ket{\psi_i}_{A|B}),
\label{TEmixed}
\end{equation}
where the minimum is taken over all possible pure state
decompositions of $\rho_{AB}=\sum_{i}p_i
\ket{\psi_i}_{AB}\bra{\psi_i}$.

Because Tsallis-$q$ entropy converges to von Neumann entropy when $q$ tends to 1,
we have
\begin{align}
\lim_{q\rightarrow1}{\mathcal T}_{q}\left(\rho_{A|B} \right)=E_{\rm f}\left(\rho_{A|B} \right),
\end{align}
where $E_{\rm f}(\rho_{AB})$ is the EoF~\cite{bdsw} of $\rho_{AB}$ defined as
\begin{equation}
E_{\rm f}(\rho_{A|B})=\min \sum_{i}p_i S(\rho^{i}_{A}),\label{eof}
\end{equation}
with the minimization over all possible pure state
decompositions of $\rho_{AB}$,
\begin{equation}
\rho_{AB}=\sum_{i} p_i |\psi^i\rangle_{AB}\langle\psi^i|,
\label{decomp}
\end{equation}
and $\T_{B}|\psi^i\rangle_{AB}\langle\psi^i|=\rho^{i}_{A}$.
In other words, Tsallis-$q$ entanglement is one-parameter generalization of EoF, and
the singularity of ${\mathcal T}_{q}\left(\rho_{AB}\right)$ at $q=1$ can be replaced by $E_{\rm f}(\rho_{AB})$.

As a dual quantity to Tsallis-$q$ entanglement,
{\em Tsallis-$q$ entanglement of Assistance}(TEoA) was defined as~\cite{KimT}
\begin{equation}
{\mathcal T}^a_{q}\left(\rho_{A|B} \right):=\max \sum_i p_i {\mathcal T}_{q}(\ket{\psi_i}_{A|B}),
\label{TEoA}
\end{equation}
where the maximum is taken over all possible pure state
decompositions of $\rho_{AB}$.
Similarly, we have
\begin{align}
\lim_{q\rightarrow1}{\mathcal T}^a_{q}\left(\rho_{A|B}
\right)=E^a\left(\rho_{A|B} \right),
\label{TsallistoEoA}
\end{align}
where $E^a(\rho_{A|B})$ is the {\em Entanglement of Assistance}~(EoA)
of $\rho_{AB}$ defined as~\cite{cohen}
\begin{equation}
E^a(\rho_{A|B})=\max \sum_{i}p_i S(\rho^{i}_{A}). \label{eoa}
\end{equation}
with the maximization over all possible pure state
decompositions of $\rho_{AB}$.

\subsection{Functional relation with concurrence in two-qubit systems}
\label{Subsec: 2formula}
For any bipartite pure state $\ket \psi_{AB}$, its concurrence is defined as~\cite{ww}
\begin{equation}
\mathcal{C}(\ket \psi_{A|B})=\sqrt{2(1-\T\rho^2_A)}, \label{pure
state concurrence}
\end{equation}
where $\rho_A=\T_B(\ket \psi_{AB}\bra \psi)$. For a mixed state
$\rho_{AB}$, its concurrence and concurrence of assistance(CoA) are defined as
\begin{align}
\mathcal{C}(\rho_{A|B})=\min \sum_k p_k \mathcal{C}({\ket
{\psi_k}}_{A|B}), \label{mixed state concurrence}
\end{align}
and
\begin{align}
\mathcal{C}^a(\rho_{A|B})=\max \sum_k p_k \mathcal{C}({\ket
{\psi_k}}_{A|B}),
\label{CoA}
\end{align}
respectively, where the minimum and maximum are taken over all possible pure state
decompositions, $\rho_{AB}=\sum_kp_k{\ket {\psi_k}}_{AB}\bra
{\psi_k}$.

For two-qubit systems, concurrence and CoA are known to have analytic
formulae~\cite{ww}; for any two-qubit state $\rho_{AB}$,
\begin{align}
\mathcal{C}(\rho_{A|B})=&\max\{0, \lambda_1-\lambda_2-\lambda_3-\lambda_4\},
\label{C_formula}
\end{align}
\begin{align}
\mathcal{C}^a(\rho_{A|B})=&\sum_{i=1}^{4}\lambda_i,
\label{Coa_formula}
\end{align}
where $\lambda_i$'s are the eigenvalues, in decreasing order, of
$\sqrt{\sqrt{\rho_{AB}}\tilde{\rho}_{AB}\sqrt{\rho_{AB}}}$ and
$\tilde{\rho}_{AB}=\sigma_y \otimes\sigma_y
\rho^*_{AB}\sigma_y\otimes\sigma_y$ with the Pauli operator
$\sigma_y$.

Later, it was shown that there is a functional relation between concurrence and Tsallis-$q$ entanglement in two-qubit systems~\cite{KimT}.
For any two-qubit state $\rho_{AB}$ (or bipartite pure state with Schmidt-rank 2), we have
\begin{equation}
{\mathcal T}_{q}\left(\rho_{A|B} \right)=f_{q}\left(\mathcal{C}(\rho_{A|B}) \right),
\label{relationmixed}
\end{equation}
for $1 \leq q \leq4$ where $f_{q}(x)$ is a monotonically increasing convex function defined as
\begin{align}
f_{q}(x)=&\frac{1}{q-1}\left[1-\left(\frac{1+\sqrt{1-x^2}}{2}\right)^{q}
-\left(\frac{1-\sqrt{1-x^2}}{2}\right)^{q}\right]
\label{f_q}
\end{align}
on $0 \leq x \leq 1$~\cite{lim}.

Here we note that the analytic evaluation of concurrence in Eq.~(\ref{C_formula}) together with the functional relations in Eq.~(\ref{relationmixed})
provides us with an analytic formula of Tsallis entanglement in two-qubit systems.
Moreover, the monotonicity and convexity of $f_{q}(x)$ for $1 \leq q \leq 4$ also provide an analytic lower bound of
TEoA,
\begin{equation}
{\mathcal T}^a_{q}\left(\rho_{A|B} \right)\geq f_{q}\left(\mathcal{C}^a(\rho_{A|B}) \right),
\label{relationassis}
\end{equation}
where the equality holds $q=2$ or $3$~\cite{KimT}.

\section{Multi-qubit entanglement constraints in terms of Tsallis entropy}
\label{Sec: monopoly}

The monogamy of a multi-qubit entanglement was shown to have a mathematical
characterization as an inequality; for a multi-qubit state $\rho_{A_1A_2\cdots A_n}$,
\begin{equation}
\mathcal{C}\left(\rho_{A_1|A_2 \cdots A_n}\right)^2  \geq  \mathcal{C}\left(\rho_{A_1|A_2}\right)^2
+\cdots+\mathcal{C}\left(\rho_{A_1|A_n}\right)^2,
\label{nCmono}
\end{equation}
where $\mathcal{C}(\rho_{A_1|A_2\cdots A_n})$ is the
concurrence of $\rho_{A_1A_2\cdots A_n}$ with respect to the
bipartition between $A_1$ and the other qubits, and
$\mathcal{C}(\rho_{A_1|A_i})$ is the concurrence
of the two-qubit reduced density matrix $\rho_{A_1A_i}$ for $i=2,\ldots,
n$~\cite{ckw,ov}. Moreover, the {\em polygamy} (or dual monogamy) inequality of
multi-qubit entanglement was also established using CoA~\cite{gbs} as
\begin{equation}
\left(\mathcal{C}^a\left(\rho_{A_1|A_2 \cdots A_n}\right)\right)^2   \leq  (\mathcal{C}^a\left(\rho_{A_1|A_2}\right))^2
+\cdots+(\mathcal{C}^a\left(\rho_{A_1|A_n}\right))^2,
\label{nCdual}
\end{equation}
where $\mathcal{C}^a(\rho_{A_1|A_2\cdots A_n})$ is the
CoA of $\rho_{A_1A_2\cdots A_n}$ with respect to the
bipartition between $A_1$ and the other qubits, and
$\mathcal{C}^a\left(\rho_{A_1|A_i}\right)$ is the CoA of the two-qubit reduced density
matrix $\rho_{A_1A_i}$ for $i=2,\ldots, n$.

Later, this mathematical characterization of monogamy and polygamy of multi-qubit entanglement
was also proposed in terms of Tsallis entropy, which encapsulate the inequalities
(\ref{nCmono}) and (\ref{nCdual}) as special cases
~\cite{KimT}. Based on the following property of the function
$f_q(x)$ in Eq.~(\ref{f_q}) for $2 \leq q \leq3$,
\begin{equation}
f_q\left(\sqrt{x^2+y^2}\right)\geq f_q(x)+f_q(y), \label{gqmono}
\end{equation}
the Tsallis monogamy inequality of multi-qubit entanglement was proposed as
\begin{equation}
{\mathcal T}_{q}\left( \rho_{A_1|A_2 \cdots A_n}\right)\geq
{\mathcal T}_{q}(\rho_{A_1|A_2}) +\cdots+
{\mathcal T}_{q}(\rho_{A_1|A_n}),
\label{Tmono}
\end{equation}
for $2 \leq q \leq3$.

For the case when $1 \leq q \leq 2$ or $3 \leq q \leq 4$, the function
$f_q(x)$ in Eq.~(\ref{f_q}) also satisfies
\begin{equation}
f_q\left(\sqrt{x^2+y^2}\right)\leq f_q(x)+f_q(y),
\label{gqpoly}
\end{equation}
which leads to the Tsallis polygamy inequality
\begin{equation}
{\mathcal T}^a_{q}\left( \rho_{A_1|A_2 \cdots A_n}\right)\leq
{\mathcal T}^a_{q}(\rho_{A_1|A_2}) +\cdots+{\mathcal
T}^a_{q}(\rho_{A_1|A_n})
\label{Tpoly}
\end{equation}
for any multi-qubit state $\rho_{A_1A_2\cdots A_n}$.

\section{Generalized multi-qubit entanglement constraints in terms of Tsallis entropy}
\label{Sec: gmonoT}
In this section, we provide generalized monogamy and polygamy inequalities of multi-qubit entanglement in terms of Tsallis entanglement and
TEoA. We first recall some properties of Tsallis entropy.

\begin{Prop}(Subadditivity of Tsallis entropy)
For any bipartite quantum state $\rho_{AB}$ with $\rho_A=\T_B \rho_{AB}$, $\rho_B=\T_A \rho_{AB}$, and $q \geq 1 $,
we have
\begin{align}
S_q\left(\rho_{AB}\right)\leq S_q\left(\rho_A\right)+S_q\left(\rho_B\right).
\label{eq: subadd}
\end{align}
\label{subadd}
\end{Prop}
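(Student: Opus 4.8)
The plan is to split off the singular point. At $q=1$ the statement is the ordinary subadditivity of von Neumann entropy, $S(\rho_{AB})\le S(\rho_A)+S(\rho_B)$, which is the nonnegativity of the relative entropy $S(\rho_{AB}\,\|\,\rho_A\otimes\rho_B)$ (Klein's inequality); since $S_q\to S$ as $q\to1$, this case follows by continuity. For $q>1$ I would clear the positive factor $q-1$ in $S_q(\rho)=\frac{1-\T\rho^q}{q-1}$, turning (\ref{eq: subadd}) into the single equivalent trace inequality
\begin{equation}
\T\rho_A^q+\T\rho_B^q\le 1+\T\rho_{AB}^q ,\qquad q>1 .
\label{eq:tracesub}
\end{equation}
I would also note that $S_q(\rho)\ge0$ for $q\ge1$, because $\T\rho^q\le1$ there; this nonnegativity is needed below.

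A natural first attempt is to pass through the product state $\rho_A\otimes\rho_B$ using the pseudo-additivity
\begin{equation}
S_q(\rho_A\otimes\rho_B)=S_q(\rho_A)+S_q(\rho_B)+(1-q)S_q(\rho_A)S_q(\rho_B),
\label{eq:pseudo}
\end{equation}
whose cross term is nonpositive for $q\ge1$, so that $S_q(\rho_A\otimes\rho_B)\le S_q(\rho_A)+S_q(\rho_B)$. This would reduce (\ref{eq: subadd}) to $S_q(\rho_{AB})\le S_q(\rho_A\otimes\rho_B)$, i.e.\ $\T\rho_{AB}^q\ge\T(\rho_A\otimes\rho_B)^q$. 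I would flag at once that this intermediate inequality is \emph{false} for large $q$: for a correlated state the largest eigenvalue of $\rho_{AB}$ may lie strictly below the product of the largest eigenvalues of the marginals. For instance $(1-\epsilon)(\ket0\bra0)_A\otimes\frac{I_B}{2}+\epsilon(\ket1\bra1)_A\otimes(\ket0\bra0)_B$ with $\epsilon=\frac1{10}$ has $\lambda_{\max}(\rho_{AB})=\frac{9}{20}$, whereas $\lambda_{\max}(\rho_A)\lambda_{\max}(\rho_B)=\frac{9}{10}\cdot\frac{11}{20}=\frac{99}{200}>\frac{9}{20}$, so $\T(\rho_A\otimes\rho_B)^q>\T\rho_{AB}^q$ once $q$ is large. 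Hence the product-state comparison is unavailable and (\ref{eq:tracesub}) must be attacked head-on.

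The crux, and the step I expect to be the main obstacle, is therefore (\ref{eq:tracesub}) for arbitrary real $q\ge1$; this is precisely Audenaert's subadditivity inequality for $q$-entropies, and it genuinely needs operator-theoretic input rather than a one-line positivity argument --- there is no Klein-type representation of $S_q(\rho_A)+S_q(\rho_B)-S_q(\rho_{AB})$ as a manifestly nonnegative relative entropy, precisely because $S_q$ fails to be additive on tensor products. My approach would be to rewrite $\T\rho_A^q=\T[(\rho_A^{q-1}\otimes I_B)\rho_{AB}]$ and $\T\rho_B^q=\T[(I_A\otimes\rho_B^{q-1})\rho_{AB}]$, so that (\ref{eq:tracesub}) becomes the statement that the expectation of $\rho_A^{q-1}\otimes I_B+I_A\otimes\rho_B^{q-1}-\rho_{AB}^{q-1}-I_{AB}$ in the state $\rho_{AB}$ is nonpositive, and then to control the behaviour of $X\mapsto X^{q-1}$ by Araki--Lieb--Thirring and majorization estimates (or, for the purposes of this paper, simply to invoke Audenaert's theorem). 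As consistency checks I would confirm the two transparent cases. At $q=1$ the $q$-derivative of $\T\rho_A^q+\T\rho_B^q-\T\rho_{AB}^q-1$ equals $-[S(\rho_A)+S(\rho_B)-S(\rho_{AB})]\le0$ while the quantity itself vanishes, so (\ref{eq:tracesub}) holds just above $q=1$. At $q=2$ it admits a clean self-contained proof: on two copies of the system, $\T\rho_A^2+\T\rho_B^2\le1+\T\rho_{AB}^2$ is equivalent to $\T[(\rho_{AB}\otimes\rho_{AB})(I-S_A)(I-S_B)]\ge0$, where $S_A$ and $S_B$ swap the two copies on subsystems $A$ and $B$; since $I-S_A$ and $I-S_B$ are twice the (commuting) antisymmetrizing projectors, their product is positive semidefinite and the bound is immediate.
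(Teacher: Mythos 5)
The paper itself offers \emph{no} proof of Proposition~\ref{subadd}: it is stated as a known background fact (it is Audenaert's subadditivity theorem for $q$-entropies, J.\ Math.\ Phys.\ {\bf 48}, 083507 (2007)), without even a citation. So there is no ``paper's proof'' to match; your proposal should be judged on its own. Everything you actually prove is correct: the reduction of (\ref{eq: subadd}) for $q>1$ to the trace inequality $\T\rho_A^q+\T\rho_B^q\le 1+\T\rho_{AB}^q$; the pseudo-additivity identity and the observation that it would only help if $\T\rho_{AB}^q\ge\T(\rho_A\otimes\rho_B)^q$; the explicit counterexample killing that intermediate inequality (your eigenvalue bookkeeping checks out: $\lambda_{\max}(\rho_{AB})=9/20$ versus $\lambda_{\max}(\rho_A)\lambda_{\max}(\rho_B)=99/200$, and the multiplicities do not rescue it for large $q$); the swap-operator proof at $q=2$, where $\T[(\rho_{AB}\otimes\rho_{AB})(I-S_A)(I-S_B)]\ge0$ because the two factors are commuting positive semidefinite operators; and the first-order analysis at $q=1$. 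These are genuinely valuable sanity checks that the paper does not supply, and your warning that there is no Klein-type one-line argument is well taken.

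The one genuine gap is the general case $1<q<\infty$ (in particular $q=3$, which Theorem~1 of the paper needs, and arbitrary $q\ge1$, which Theorem~2 and its corollaries need): you reformulate it as the nonpositivity of $\T[(\rho_A^{q-1}\otimes I_B+I_A\otimes\rho_B^{q-1}-\rho_{AB}^{q-1}-I_{AB})\rho_{AB}]$ and then defer to ``Araki--Lieb--Thirring and majorization estimates'' or to Audenaert's theorem, without carrying out either. That reformulation alone does not close the argument --- the operator in question is not negative semidefinite in general, and Audenaert's actual proof goes through a nontrivial variational/interpolation argument rather than a direct operator inequality. Since the paper also merely asserts the Proposition, citing Audenaert is the appropriate resolution and your proposal is at least as rigorous as the source; but as a self-contained proof it is incomplete precisely at the analytic core, and you should either supply the full argument for general $q$ or state explicitly that the general case is quoted from the literature.
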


Let us consider a three-party pure state $\ket{\psi}_{ABC}$ and its reduced density matrices $\rho_{BC}=\T_{A}\ket{\psi}_{ABC}\bra{\psi}$,
$\rho_{B}=\T_{AC}\ket{\psi}_{ABC}\bra{\psi}$ and $\rho_{C}=\T_{AB}\ket{\psi}_{ABC}\bra{\psi}$.
For $q \geq 1$, Proposition~\ref{subadd} implies
\begin{align}
S_q\left(\rho_{BC}\right)\leq S_q\left(\rho_B\right)+S_q\left(\rho_C\right).
\label{subadd2}
\end{align}

Because $S_q\left(\rho_{BC}\right)=S_q\left(\rho_A\right)$ and $S_q\left(\rho_{C}\right)=S_q\left(\rho_{AB}\right)$,
Eq.~(\ref{subadd2}) can be rewritten as
\begin{align}
S_q\left(\rho_{A}\right)- S_q\left(\rho_B\right) \leq S_q\left(\rho_{AB}\right),
\label{subadd3}
\end{align}
and similarly, we also have
\begin{align}
S_q\left(\rho_{B}\right)- S_q\left(\rho_A\right) \leq S_q\left(\rho_{AB}\right).
\label{subadd4}
\end{align}
Thus we have the following triangle inequality of Tsallis entropy
\begin{align}
|S_q\left(\rho_{A}\right)- S_q\left(\rho_B\right)| \leq S_q\left(\rho_{AB}\right)\leq S_q\left(\rho_A\right)+S_q\left(\rho_B\right),
\label{triin}
\end{align}
for any bipartite quantum state $\rho_{AB}$ and $q \geq 1$.

\begin{Thm}
For $q=2$ or $3$ and any multi-qubit pure state $\ket{\psi}_{ABC_1C_2\cdots C_n}$, we have
\begin{align}
{\mathcal T}_{q}\left(\ket{\psi}_{AB|C_1C_2\cdots C_n}\right)
\geq \sum_{i=1}^{n}\left[{\mathcal T}_{q}(\rho_{A|C_i})-{\mathcal T}^a_{q}(\rho_{B|C_i})\right],
\label{eq:monothe1}
\end{align}
where $\rho_{AB}=\T_{C_1...C_{n}}(|\psi\rangle\langle\psi|)$,
$\rho_{AC_i}=\T_{BC_1...C_{i-1}C_{i+1}...C_{n}}(|\psi\rangle\langle\psi|)$
and $\rho_{BC_i}=\T_{AC_1...C_{i-1}C_{i+1}...C_{n}}(|\psi\rangle\langle\psi|)$.
\label{monothm1}
\end{Thm}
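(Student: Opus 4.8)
The plan is to collapse the two-qubit-versus-rest quantity on the left into single-qubit entropies by splitting the block $AB$ with the Tsallis triangle inequality (\ref{triin}), and then to treat the $A$-part with the monogamy relation (\ref{Tmono}) and the $B$-part with the polygamy relation (\ref{Tpoly}). Concretely, since the global state is pure I would first write $\mathcal{T}_q(\ket{\psi}_{AB|C_1\cdots C_n})=S_q(\rho_{AB})$ and apply the lower half of (\ref{triin}) to get $S_q(\rho_{AB})\ge S_q(\rho_A)-S_q(\rho_B)$. The reason exactly $q=2$ or $3$ appears is that these are the only values lying simultaneously in the monogamy window $2\le q\le 3$ of (\ref{Tmono}) and the polygamy window $1\le q\le 2,\,3\le q\le 4$ of (\ref{Tpoly}); the argument needs both tools at once.

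Next I would identify the two marginal entropies with entanglement quantities of the purification. Because $\ket{\psi}$ is pure, $S_q(\rho_A)=\mathcal{T}_q(\ket{\psi}_{A|BC_1\cdots C_n})$ and $S_q(\rho_B)=\mathcal{T}_q^a(\ket{\psi}_{B|AC_1\cdots C_n})$, the latter because for a pure state the only pure-state decomposition is the state itself, so the convex- and concave-roof extensions both collapse to $S_q$ of the marginal. Applying (\ref{Tmono}) to the first and (\ref{Tpoly}) to the second yields $S_q(\rho_A)\ge \mathcal{T}_q(\rho_{A|B})+\sum_i\mathcal{T}_q(\rho_{A|C_i})$ and $S_q(\rho_B)\le \mathcal{T}_q^a(\rho_{B|A})+\sum_i\mathcal{T}_q^a(\rho_{B|C_i})$. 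Substituting both into the triangle bound gives
\begin{align}
S_q(\rho_{AB}) &\ge \sum_{i=1}^{n}\left[\mathcal{T}_q(\rho_{A|C_i})-\mathcal{T}_q^a(\rho_{B|C_i})\right] \nonumber\\
&\quad +\,\mathcal{T}_q(\rho_{A|B})-\mathcal{T}_q^a(\rho_{B|A}). \nonumber
\end{align}

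The main obstacle is the last pair of terms, and it is essential to keep (not discard) the $\mathcal{T}_q(\rho_{A|B})$ produced by the monogamy step. Using the two-qubit evaluations $\mathcal{T}_q(\rho_{A|B})=f_q(\mathcal{C}(\rho_{AB}))$ from (\ref{relationmixed}) and, at $q=2,3$, $\mathcal{T}_q^a(\rho_{B|A})=f_q(\mathcal{C}^a(\rho_{AB}))$ from (\ref{relationassis}), together with $\mathcal{C}(\rho_{AB})\le \mathcal{C}^a(\rho_{AB})$ and the monotonicity of $f_q$, the combination $\mathcal{T}_q(\rho_{A|B})-\mathcal{T}_q^a(\rho_{B|A})$ is \emph{nonpositive}, so the bound above is strictly weaker than the claim whenever $\rho_{AB}$ is mixed. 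Closing this gap is the crux: I would sharpen the $B$-side polygamy to the estimate $S_q(\rho_B)\le \mathcal{T}_q(\rho_{B|A})+\sum_i\mathcal{T}_q^a(\rho_{B|C_i})$, in which the $A$-slot carries the convex-roof quantity $\mathcal{T}_q(\rho_{B|A})$ rather than its assisted version. Since Tsallis-$q$ entanglement of a two-qubit state is symmetric, $\mathcal{T}_q(\rho_{B|A})=\mathcal{T}_q(\rho_{A|B})$, so this refined term cancels exactly against the monogamy's $\mathcal{T}_q(\rho_{A|B})$, leaving precisely $\sum_i[\mathcal{T}_q(\rho_{A|C_i})-\mathcal{T}_q^a(\rho_{B|C_i})]$. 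The real work thus concentrates in justifying this sharpened $B$-side inequality; at $q=2$ it is the concurrence statement $\mathcal{C}^2(\ket{\psi}_{B|AC_1\cdots C_n})\le \mathcal{C}^2(\rho_{B|A})+\sum_i(\mathcal{C}^a)^2(\rho_{B|C_i})$, which for three parties reduces to a tangle-type inequality, and the exactness of (\ref{relationassis}) at $q=2,3$ is exactly what lets one transport it to Tsallis entropy.
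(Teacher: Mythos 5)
Your opening move (writing $\mathcal{T}_q(\ket{\psi}_{AB|C_1\cdots C_n})=S_q(\rho_{AB})$ and applying the lower half of the triangle inequality (\ref{triin})) is exactly the paper's first step, and your diagnosis of the obstacle is accurate: combining (\ref{Tmono}) on the $A$-side with (\ref{Tpoly}) on the $B$-side leaves the nonpositive residue $\mathcal{T}_q(\rho_{A|B})-\mathcal{T}^a_q(\rho_{B|A})$, so the naive bound is too weak. The problem is that your repair rests entirely on the ``sharpened $B$-side inequality'' $S_q(\rho_B)\le \mathcal{T}_q(\rho_{B|A})+\sum_i\mathcal{T}^a_q(\rho_{B|C_i})$, i.e.\ a polygamy bound in which the $A$-slot carries the \emph{convex-roof} quantity. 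This is not among the results you are allowed to quote ((\ref{nCdual}) and (\ref{Tpoly}) both put the assisted quantity in every slot), and you do not prove it: you only remark that at $q=2$ it reduces to the concurrence statement $\mathcal{C}^2(\ket{\psi}_{B|AC_1\cdots C_n})\le \mathcal{C}^2(\rho_{B|A})+\sum_i(\mathcal{C}^a)^2(\rho_{B|C_i})$ and that ``the real work concentrates'' there. For three qubits this happens to hold with equality (apply the Yu--Song identity with the roles of the two partner qubits exchanged), but for general $n$ the partner $\mathbf{C}=C_1\cdots C_n$ is not a qubit, the exchanged form of (\ref{22d}) is not available, and the claim would itself require a new argument. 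As written, the proof is therefore incomplete at its decisive step.

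The paper closes this gap without any new inequality by \emph{not} using (\ref{Tmono}) on the $A$-side. Instead it invokes the exact $2\otimes 2\otimes d$ identity (\ref{22d}), $\mathcal{C}(\ket{\psi}_{A|B\mathbf{C}})^2=\mathcal{C}^a(\rho_{A|B})^2+\mathcal{C}(\rho_{A|\mathbf{C}})^2$, together with the additivity $f_q\bigl(\sqrt{x^2+y^2}\bigr)=f_q(x)+f_q(y)$ valid precisely at $q=2,3$, to write $\mathcal{T}_q(\ket{\psi}_{A|B\mathbf{C}})=f_q\bigl(\mathcal{C}^a(\rho_{A|B})\bigr)+f_q\bigl(\mathcal{C}(\rho_{A|\mathbf{C}})\bigr)$ \emph{exactly}. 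The $B$-side is then bounded by the standard concurrence polygamy inequality (\ref{nCdual}), which produces $f_q\bigl(\mathcal{C}^a(\rho_{B|A})\bigr)=f_q\bigl(\mathcal{C}^a(\rho_{A|B})\bigr)$ --- the same \emph{assisted} term --- so the cross-terms cancel between two CoA quantities rather than two convex-roof ones. The remaining pieces are handled by CKW (\ref{nCmono}) plus monotonicity of $f_q$ to get $f_q\bigl(\mathcal{C}(\rho_{A|\mathbf{C}})\bigr)\ge\sum_i\mathcal{T}_q(\rho_{A|C_i})$, and by (\ref{relationassis}) to get $\sum_i f_q\bigl(\mathcal{C}^a(\rho_{B|C_i})\bigr)\le\sum_i\mathcal{T}^a_q(\rho_{B|C_i})$. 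If you want to salvage your route, you should either prove your sharpened inequality directly or, more simply, replace your $A$-side monogamy step with the identity (\ref{22d}).
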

\begin{proof}
For simplicity, we sometimes denote ${\bf C} =\{C_1, C_2, \cdots, C_n \}$.
From the definition of Tsallis entanglement of $\ket{\psi}_{ABC_1C_2\cdots C_n}$ with respect to the bipartition between
$AB$ and ${\bf C}$, we have
\begin{align}
{\mathcal T}_{q}\left(\ket{\psi}_{AB|{\bf C}}\right)=&S_q\left(\rho_{AB}\right)\nonumber\\
\geq& S_q\left(\rho_{A}\right)- S_q\left(\rho_{B}\right)\nonumber\\
=&{\mathcal T}_{q}\left(\ket{\psi}_{A|B{\bf C}}\right)-{\mathcal T}_{q}\left(\ket{\psi}_{B|A{\bf C}}\right),
\label{triin2}
\end{align}
where the inequality is due to the Inequality (\ref{triin}).

We note that for any pure state $\ket{\psi}_{ABC}$ in a $2\otimes2\otimes d$ quantum system with reduced density matrices
 $\rho_{AB}=\T_{C}\ket{\psi}_{ABC}\bra{\psi}$ and $\rho_{AC}=\T_{B}\ket{\psi}_{ABC}\bra{\psi}$, we have~\cite{22d}
\begin{align}
\mathcal{C}\left(\ket{\psi}_{A|BC}\right)^2=\mathcal{C}^a\left(\rho_{A|B}\right)^2+\mathcal{C}\left(\rho_{A|C}\right)^2.
\label{22d}
\end{align}
For $q=2$ or $3$, Inequalities~(\ref{gqmono}) and (\ref{gqpoly}) imply that
\begin{equation}
f_q\left(\sqrt{x^2+y^2}\right)= f_q(x)+f_q(y),
\label{gqeq}
\end{equation}
therefore
\begin{align}
{\mathcal T}_{q}\left(\ket{\psi}_{A|B{\bf C}}\right)=&f_q\left(\mathcal{C}\left(\ket{\psi}_{A|B{\bf C}}\right)\right)\nonumber\\
=&f_q\left(\sqrt{\mathcal{C}^a\left(\rho_{A|B}\right)^2+\mathcal{C}\left(\rho_{A|{\bf C}}\right)^2}\right)\nonumber\\
=&f_q\left(\mathcal{C}^a\left(\rho_{A|B}\right)\right)+f_q\left(\mathcal{C}\left(\rho_{A|{\bf C}}\right)\right),
\label{ABC1}
\end{align}
where the last equality is due to Eq.~(\ref{gqeq}).
Moreover, we also have
\begin{align}
{\mathcal T}_{q}\left(\ket{\psi}_{B|A{\bf C}}\right)=&f_q\left(\mathcal{C}\left(\ket{\psi}_{B|A{\bf C}}\right)\right)\nonumber\\
\leq&f_q\left(\sqrt{\mathcal{C}^a\left(\rho_{A|B}\right)^2+\sum_{i=1}^{n}\mathcal{C}^a\left(\rho_{B|C_i}\right)^2}\right)\nonumber\\
=&f_q\left(\mathcal{C}^a\left(\rho_{A|B}\right)\right)+f_q\left(\sqrt{\sum_{i=1}^{n}\mathcal{C}^a\left(\rho_{B|C_i}\right)^2}\right),
\label{BAC1}
\end{align}
where the first inequality is due to Inequality~(\ref{nCdual}) and the monotonicity of $f_q(x)$ and the last equality is from Eq.~(\ref{gqeq}).

Eq.~(\ref{ABC1}) and Inequality~(\ref{BAC1}) imply that
\begin{align}
{\mathcal T}_{q}\left(\ket{\psi}_{A|B{\bf C}}\right)&-{\mathcal T}_{q}\left(\ket{\psi}_{B|A{\bf C}}\right)\nonumber\\
&\geq f_q\left(\mathcal{C}\left(\rho_{A|{\bf C}}\right)\right)-f_q\left(\sqrt{\sum_{i=1}^{n}\mathcal{C}^a\left(\rho_{B|C_i}\right)^2}\right).
\label{abcacba}
\end{align}
Here we note that
\begin{align}
f_q\left(\mathcal{C}\left(\rho_{A|{\bf C}}\right)\right)\geq& f_q\left(\sqrt{\sum_{i=1}^{n}\mathcal{C}\left(\rho_{A|C_i}\right)^2}\right)\nonumber\\
=&\sum_{i=1}^{n}f_q\left(\mathcal{C}\left(\rho_{A|C_i}\right)\right)\nonumber\\
=&\sum_{i=1}^{n}\mathcal{T}_{q}\left(\rho_{A|C_i}\right),
\label{AC1}
\end{align}
where the first inequality is due to Inequality~(\ref{nCmono}) and the monotonicity of $f_q(x)$,
the first equality is from the iterative use of Eq.~(\ref{gqeq}), and the last equality is from the functional relation of two-qubit
concurrence and Tsallis entanglement in Eq.~(\ref{relationmixed}).
Moreover, we also have
\begin{align}
f_q\left(\sqrt{\sum_{i=1}^{n}\mathcal{C}^a\left(\rho_{B|C_i}\right)^2}\right)
=&\sum_{i=1}^{n}f_q\left(\mathcal{C}^a\left(\rho_{B|C_i}\right)\right)\nonumber\\
\leq&\sum_{i=1}^{n}\mathcal{T}^a_{q}\left(\rho_{B|C_i}\right),
\label{BCi}
\end{align}
where the first equality is from the iterative use of Eq.~(\ref{gqeq}), and the last inequality is from Inequality~(\ref{relationassis}).

From Inequalities~(\ref{abcacba}), (\ref{AC1}) and (\ref{BCi}), we have
\begin{align}
{\mathcal T}_{q}\left(\ket{\psi}_{A|B{\bf C}}\right)&-{\mathcal T}_{q}\left(\ket{\psi}_{B|A{\bf C}}\right)\nonumber\\
&\geq \sum_{i=1}^{n}\mathcal{T}_{q}\left(\rho_{A|C_i}\right)-\sum_{i=1}^{n}\mathcal{T}^a_{q}\left(\rho_{B|C_i}\right),
\label{abcacba2}
\end{align}
which, together with Inequality~(\ref{triin2}), completes the proof.
\end{proof}

Theorem~\ref{monothm1} provides a monogamy-type lower bound of multi-qubit entanglement between two-qubit subsystem $AB$ and
the other $n$-qubit subsystem $C_1C_2\cdots C_n$
in terms of two-qubit entanglements inherent there.
For the case when one-qubit subsystem $B$ is separable from other qubits, Inequality~(\ref{eq:monothe1})
reduces to the CKW-type monogamy inequality in ~(\ref{Tmono}),
thus Theorem~\ref{monothm1} provides a generalized monogamy relation of multi-qubit entanglement in terms of Tsallis entropy.
The lower bound provided in Theorem~\ref{monothm1} is analytically obtainable due to the analytic evaluation of
two-qubit concurrence and CoA as well as their functional relation with Tsallis entanglement provided in Eq.~(\ref{relationmixed})
and Inequality~(\ref{relationassis}).

Now, we present a generalized polygamy relation of multi-qubit entanglement in terms of TEoA. We first
provide the following theorem, which shows a reciprocal relation of TEoA in three-party quantum systems.
\begin{Thm}
For $q\geq1$ any three-party quantum state $\rho_{ABC}$, we have
\begin{align}
{\mathcal T}_{q}^a\left(\rho_{A|BC}\right)
\leq& {\mathcal T}^a_{q}\left(\rho_{B|AC}\right)+{\mathcal T}^a_{q}\left(\rho_{C|AB}\right).
\label{eq: genpoly1}
\end{align}
\label{thm: genpoly1}
\end{Thm}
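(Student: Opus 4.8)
The plan is to push the triangle inequality of Tsallis entropy, Inequality~(\ref{triin}), down to the individual pure states of an optimal decomposition for the left-hand side, and then to invoke the fact that TEoA is a \emph{maximum} over all pure-state decompositions. Concretely, I would first fix a decomposition $\rho_{ABC}=\sum_i p_i\ket{\psi_i}_{ABC}\bra{\psi_i}$ attaining the maximum in the definition~(\ref{TEoA}) of ${\mathcal T}^a_q(\rho_{A|BC})$, so that ${\mathcal T}^a_q(\rho_{A|BC})=\sum_i p_i S_q(\rho^i_A)$, where $\rho^i_A=\T_{BC}\ket{\psi_i}\bra{\psi_i}$ and $\rho^i_B,\rho^i_C,\rho^i_{BC}$ denote the analogous reductions.

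The central step is to bound each $S_q(\rho^i_A)$. Since $\ket{\psi_i}_{ABC}$ is pure, $\rho^i_A$ and $\rho^i_{BC}$ have identical nonzero spectra, hence $S_q(\rho^i_A)=S_q(\rho^i_{BC})$. Treating $B$ and $C$ as the two parts of the bipartite state $\rho^i_{BC}$ and applying the upper bound in~(\ref{triin}) then gives
$$
S_q(\rho^i_A)=S_q(\rho^i_{BC})\leq S_q(\rho^i_B)+S_q(\rho^i_C)={\mathcal T}_q(\ket{\psi_i}_{B|AC})+{\mathcal T}_q(\ket{\psi_i}_{C|AB}),
$$
the final equality being the pure-state definition~(\ref{TEpure}). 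This is exactly where the hypothesis $q\geq1$ enters, since that is the range in which subadditivity (Proposition~\ref{subadd}), and therefore~(\ref{triin}), holds.

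I would then average this inequality with the weights $p_i$. The left-hand side reassembles ${\mathcal T}^a_q(\rho_{A|BC})$, while the two resulting sums $\sum_i p_i{\mathcal T}_q(\ket{\psi_i}_{B|AC})$ and $\sum_i p_i{\mathcal T}_q(\ket{\psi_i}_{C|AB})$ are the average Tsallis entanglements of \emph{particular} decompositions of $\rho_{ABC}$ across the bipartitions $B|AC$ and $C|AB$. Because ${\mathcal T}^a_q$ is the maximum over all decompositions, each sum is bounded above by ${\mathcal T}^a_q(\rho_{B|AC})$ and ${\mathcal T}^a_q(\rho_{C|AB})$ respectively, which yields~(\ref{eq: genpoly1}). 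The one point to watch is the orientation of the optimization: the decomposition chosen to be optimal for the $A|BC$ cut is generally suboptimal for the other two cuts, but since those appear beneath a maximum on the right this suboptimality only helps, so no genuine analytic obstacle arises — the whole argument is the pairing of a per-state entropy inequality with the variational definition of TEoA.
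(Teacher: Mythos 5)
Your argument is correct and is essentially identical to the paper's own proof: both fix an optimal decomposition for ${\mathcal T}^a_q(\rho_{A|BC})$, apply the subadditivity of Tsallis entropy (Proposition~\ref{subadd}) to each pure state via $S_q(\rho^j_A)=S_q(\rho^j_{BC})\leq S_q(\rho^j_B)+S_q(\rho^j_C)$, and then bound the resulting weighted averages by the maxima defining ${\mathcal T}^a_q(\rho_{B|AC})$ and ${\mathcal T}^a_q(\rho_{C|AB})$. No discrepancies to note.
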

\begin{proof}
Let
\begin{align}
\rho_{ABC}=\sum_{j}p_j\ket{\psi_j}_{ABC}\bra{\psi_j}
\label{opt}
\end{align}
be an optimal decomposition realizing ${\mathcal T}_{q}^a\left(\rho_{A|BC}\right)$, that is,
\begin{align}
{\mathcal T}_{q}^a\left(\rho_{A|BC}\right)=\sum_{j}p_j{\mathcal T}_{q}\left(\ket{\psi_j}_{A|BC}\right).
\label{optdecT}
\end{align}

For each pure state $\ket{\psi_j}_{ABC}$ in the decomposition~(\ref{opt})
with $\rho^j_{BC}=\T_{A}\ket{\psi_j}_{ABC}\bra{\psi_j}$, $\rho^j_{B}=\T_{AC}\ket{\psi_j}_{ABC}\bra{\psi_j}$ and
$\rho^j_{C}=\T_{AB}\ket{\psi_j}_{ABC}\bra{\psi_j}$, we have
\begin{align}
{\mathcal T}_{q}\left(\ket{\psi_j}_{A|BC}\right)=&S_q\left(\rho^j_{BC}\right)\nonumber\\
\leq & S_q\left(\rho^j_{B}\right)+S_q\left(\rho^j_{C}\right)\nonumber\\
=&{\mathcal T}_{q}\left(\ket{\psi_j}_{B|AC}\right)+{\mathcal T}_{q}\left(\ket{\psi_j}_{C|AB}\right),
\label{subj}
\end{align}
where the inequality is due to the subadditivity of Tsallis entropy in Proposition~\ref{subadd}.

Now we have
\begin{align}
{\mathcal T}_{q}^a\left(\rho_{A|BC}\right)=&\sum_{j}p_j{\mathcal T}_{q}\left(\ket{\psi_j}_{A|BC}\right)\nonumber\\
\leq & \sum_{j}p_j{\mathcal T}_{q}\left(\ket{\psi_j}_{B|AC}\right)+\sum_{j}p_j{\mathcal T}_{q}\left(\ket{\psi_j}_{C|AB}\right)\nonumber\\
\leq& {\mathcal T}^a_{q}\left(\rho_{B|AC}\right)+{\mathcal T}^a_{q}\left(\rho_{C|AB}\right),
\label{genpoly1}
\end{align}
where the first inequality is from Inequality~(\ref{subj}), and the second inequality is due to the definition of TEoA.
\end{proof}

Theorem~\ref{thm: genpoly1} shows the reciprocal relation of TEoA in three-party quantum systems;
the sum of two TEoA's with respect to two possible bipartition(B|AC and C|AB)  always bounds the TEoA with respect to the remaining bipartition (A|BC).
Moreover, the iterative use of Inequality~(\ref{eq: genpoly1}) naturally leads us to the generalization of Theorem~\ref{thm: genpoly1} into
multi-party quantum systems.

\begin{Cor}
For $q\geq1$ and any multi-party quantum state $\rho_{A_1A_2\cdots A_n}$,
\begin{align}
{\mathcal T}_{q}^a\left(\rho_{A_1|A_2\cdots A_n}\right)
\leq& \sum_{i=2}^{n}{\mathcal T}^a_{q}\left(\rho_{A_i|A_1\cdots\widehat{A_i}\cdots A_n}\right),
\label{eq: genpoly3}
\end{align}
where
\begin{align}
{\mathcal T}^a_{q}\left(\rho_{A_i|A_1\cdots\widehat{A_i}\cdots A_n}\right)={\mathcal T}^a_{q}\left(\rho_{A_i|A_1\cdots A_{i-1}A_{i+1}\cdots A_n}\right)
\end{align}
for each $i=1, \cdots, n$.
\label{racmulti}
\end{Cor}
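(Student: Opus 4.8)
The plan is to prove the Corollary by induction on the number of parties $n$, using Theorem~\ref{thm: genpoly1} both as the base case and as the engine of the inductive step. The base case $n=3$ is exactly Inequality~(\ref{eq: genpoly1}). For the inductive step I would assume the bound holds for every state on $n-1$ parties and establish it for $\rho_{A_1A_2\cdots A_n}$. The point to exploit is that Theorem~\ref{thm: genpoly1} is stated for \emph{any} three-party state $\rho_{ABC}$, so the three ``parties'' are allowed to be groups of qubits; this is what lets a single three-party inequality peel off one summand at a time from the $n$-party relation.

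Before iterating, I would record two elementary facts about TEoA. First, TEoA is invariant under exchanging the two sides of a bipartition, $\mathcal{T}_q^a\left(\rho_{X|Y}\right)=\mathcal{T}_q^a\left(\rho_{Y|X}\right)$: for any pure $\ket{\psi}_{XY}$ the Schmidt decomposition gives $S_q(\rho_X)=S_q(\rho_Y)$, hence $\mathcal{T}_{q}(\ket{\psi}_{X|Y})=\mathcal{T}_{q}(\ket{\psi}_{Y|X})$, and TEoA is the maximum of averages of this symmetric quantity over decompositions of the \emph{same} state $\rho_{XY}$. Second, TEoA depends only on the bipartition itself, so regrouping several parties into a single block leaves its value unchanged.

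The core step is to apply Theorem~\ref{thm: genpoly1} to the three-party state obtained by regarding $A_1$, $A_2$, and the block $A_3\cdots A_n$ as the three parties, which yields
\begin{align}
\mathcal{T}_q^a\left(\rho_{A_1|A_2\cdots A_n}\right)
\leq &\, \mathcal{T}_q^a\left(\rho_{A_2|A_1A_3\cdots A_n}\right)\nonumber\\
&+\mathcal{T}_q^a\left(\rho_{(A_3\cdots A_n)|A_1A_2}\right).\nonumber
\end{align}
The first term is already the $i=2$ summand of the claimed bound. For the second term I would merge $A_1A_2$ into a single party $D:=A_1A_2$, so that $\rho_{DA_3\cdots A_n}$ is an $(n-1)$-party state, flip the cut by the symmetry above to write $\mathcal{T}_q^a\left(\rho_{(A_3\cdots A_n)|D}\right)=\mathcal{T}_q^a\left(\rho_{D|A_3\cdots A_n}\right)$, and then invoke the induction hypothesis on this $(n-1)$-party state with $D$ as the distinguished party:
\begin{align}
\mathcal{T}_q^a\left(\rho_{D|A_3\cdots A_n}\right)
\leq \sum_{i=3}^{n}\mathcal{T}_q^a\left(\rho_{A_i|DA_3\cdots\widehat{A_i}\cdots A_n}\right).\nonumber
\end{align}
Since $D=A_1A_2$, each summand equals $\mathcal{T}_q^a\left(\rho_{A_i|A_1\cdots\widehat{A_i}\cdots A_n}\right)$, and combining the two estimates produces the full sum $\sum_{i=2}^{n}$, completing the induction.

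The step I expect to be the main obstacle is the \emph{orientation} of the leftover term: a single application of Theorem~\ref{thm: genpoly1} splits off one clean single-party summand but leaves a term in which a whole block of parties sits on the \emph{left} of the cut, which does not directly match the single-party-on-the-left form of the induction hypothesis. The device that resolves this is precisely the exchange symmetry of TEoA together with the regrouping of the already-processed qubits $A_1A_2$ into one block; once these are in place the iteration is routine. Equivalently, one may dispense with the explicit induction and simply iterate (\ref{eq: genpoly1}) directly, at each stage flipping the leftover cut by symmetry and applying the three-party inequality with the growing processed block $A_1\cdots A_k$ kept together as the distributed party, so that $A_{k+1}$ is split off as the next single-party summand.
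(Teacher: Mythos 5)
Your proof is correct and follows essentially the route the paper intends: the paper gives no detailed argument, stating only that the Corollary follows from ``iterative use'' of Inequality~(\ref{eq: genpoly1}), and your induction --- peeling off one party at a time via Theorem~\ref{thm: genpoly1} applied to the grouping $A_1$, $A_2$, $A_3\cdots A_n$ --- is exactly that iteration made precise. Your explicit justification of the exchange symmetry $\mathcal{T}_q^a\left(\rho_{X|Y}\right)=\mathcal{T}_q^a\left(\rho_{Y|X}\right)$ via the Schmidt decomposition supplies a detail the paper leaves implicit, and it is needed to reorient the leftover block before invoking the induction hypothesis.
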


The following corollary presents a generalized polygamy relation of multi-qubit systems in terms of TEoA.
\begin{Cor}
For $1 \leq q \leq 2$ or  $3 \leq q \leq 4$ and any multi-qubit state $\rho_{ABC_1C_2\cdots C_n}$,
we have
\begin{align}
{\mathcal T}_{q}^a\left(\rho_{AB|C_1C_2\cdots C_n}\right)
\leq &2 {\mathcal T}_{q}^a\left(\rho_{A|B}\right)\nonumber\\
&+\sum_{i=1}^{n}\left[\mathcal{T}^a_{q}\left(\rho_{A|C_i}\right)
+\mathcal{T}^a_{q}\left(\rho_{B|C_i}\right)\right].
\label{eq: genpoly2}
\end{align}
\label{Cor: genpoly2}
\end{Cor}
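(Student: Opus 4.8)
The plan is to collapse the two–qubit block $AB$ into a single party and then combine the three–party reciprocal relation of Theorem~\ref{thm: genpoly1} with the Tsallis polygamy inequality~(\ref{Tpoly}). Throughout I will write ${\bf C}=C_1C_2\cdots C_n$ and use the symmetry of TEoA under exchange of the two parts of a bipartition: for a bipartite pure state $\ket{\psi}_{XY}$ one has ${\mathcal T}_q(\ket{\psi}_{X|Y})=S_q(\rho_X)=S_q(\rho_Y)={\mathcal T}_q(\ket{\psi}_{Y|X})$, so the max–roof extension gives ${\mathcal T}^a_q(\rho_{X|Y})={\mathcal T}^a_q(\rho_{Y|X})$.

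First I would view $\rho_{ABC_1\cdots C_n}$ as a three–party state on $A$, $B$ and the composite system ${\bf C}$, and apply Theorem~\ref{thm: genpoly1} with ${\bf C}$ in the distinguished role, obtaining
\[
{\mathcal T}_{q}^a\left(\rho_{{\bf C}|AB}\right)\leq {\mathcal T}^a_{q}\left(\rho_{A|B{\bf C}}\right)+{\mathcal T}^a_{q}\left(\rho_{B|A{\bf C}}\right).
\]
By the symmetry noted above, the left–hand side equals ${\mathcal T}_{q}^a(\rho_{AB|{\bf C}})={\mathcal T}_{q}^a(\rho_{AB|C_1\cdots C_n})$, which is precisely the quantity to be bounded. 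This step is valid for all $q\geq1$.

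Second, I would bound each term on the right by the Tsallis polygamy inequality~(\ref{Tpoly}), which holds for $1\leq q\leq 2$ or $3\leq q\leq 4$. Taking $A$ as the distinguished qubit and $B,C_1,\ldots,C_n$ as the remaining qubits, and then doing the same with the roles of $A$ and $B$ interchanged, gives
\[
{\mathcal T}^a_{q}\left(\rho_{A|B{\bf C}}\right)\leq {\mathcal T}^a_{q}(\rho_{A|B})+\sum_{i=1}^{n}{\mathcal T}^a_{q}(\rho_{A|C_i}),
\]
\[
{\mathcal T}^a_{q}\left(\rho_{B|A{\bf C}}\right)\leq {\mathcal T}^a_{q}(\rho_{B|A})+\sum_{i=1}^{n}{\mathcal T}^a_{q}(\rho_{B|C_i}).
\]
Adding these two bounds, using ${\mathcal T}^a_{q}(\rho_{B|A})={\mathcal T}^a_{q}(\rho_{A|B})$ once more, and substituting into the reciprocal relation, the two copies of ${\mathcal T}^a_{q}(\rho_{A|B})$ merge into $2{\mathcal T}^a_{q}(\rho_{A|B})$ and the two sums combine into $\sum_{i=1}^{n}\left[{\mathcal T}^a_{q}(\rho_{A|C_i})+{\mathcal T}^a_{q}(\rho_{B|C_i})\right]$, which is exactly the claimed inequality~(\ref{eq: genpoly2}).

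The step I expect to carry the real content is the first one: recognizing that grouping $C_1\cdots C_n$ into a single party and invoking the three–party reciprocal relation (together with the symmetry of TEoA) is what converts the ``two qubits versus the rest'' bipartition into two single–qubit polygamy bounds. The parameter restriction $1\leq q\leq 2$ or $3\leq q\leq 4$ enters only through the polygamy step~(\ref{Tpoly}), since the reciprocal relation of Theorem~\ref{thm: genpoly1} itself is valid for all $q\geq1$; everything after the reduction is routine bookkeeping with the symmetry identities.
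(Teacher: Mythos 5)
Your proposal is correct and follows essentially the same route as the paper: regard the state as a three-party state $\rho_{AB{\bf C}}$, apply the reciprocal relation of Theorem~\ref{thm: genpoly1} to bound ${\mathcal T}^a_q(\rho_{AB|{\bf C}})$ by ${\mathcal T}^a_q(\rho_{A|B{\bf C}})+{\mathcal T}^a_q(\rho_{B|A{\bf C}})$, then apply the multi-qubit polygamy inequality~(\ref{Tpoly}) to each term. Your explicit justification of the symmetry ${\mathcal T}^a_q(\rho_{X|Y})={\mathcal T}^a_q(\rho_{Y|X})$ is a detail the paper leaves implicit, but the argument is otherwise identical.
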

\begin{proof}
By considering $\rho_{ABC_1C_2\cdots C_n}$ as a three-party quantum state $\rho_{AB{\bf C}}$ with ${\bf C}=C_1C_2\cdots C_n$,
Theorem~\ref{thm: genpoly1} leads us to
\begin{align}
{\mathcal T}_{q}^a\left(\rho_{AB|{\bf C}}\right)
\leq& {\mathcal T}^a_{q}\left(\rho_{A|B{\bf C}}\right)+{\mathcal T}^a_{q}\left(\rho_{B|A{\bf C}}\right).
\label{genpoly2}
\end{align}

Form the multi-qubit Tsallis polygamy inequality in (\ref{Tpoly}), we have
\begin{align}
{\mathcal T}^a_{q}\left(\rho_{A|B{\bf C}}\right)&\leq {\mathcal T}_{q}^a\left(\rho_{A|B}\right)+\sum_{i=1}^{n}\mathcal{T}^a_{q}\left(\rho_{A|C_i}\right)\nonumber\\
{\mathcal T}^a_{q}\left(\rho_{B|A{\bf C}}\right)&\leq {\mathcal T}_{q}^a\left(\rho_{B|A}\right)+\sum_{i=1}^{n}\mathcal{T}^a_{q}\left(\rho_{B|C_i}\right).
\label{genpoly3}
\end{align}
Inequality~(\ref{genpoly2}) together with Inequalities~(\ref{genpoly3}) lead us to Inequality~(\ref{eq: genpoly2}).
\end{proof}

Corollary~\ref{Cor: genpoly2} provides a polygamy-type upper bound of multi-qubit entanglement between two-qubit subsystem $AB$ and the other $n$-qubit
subsystem $C_1C_2\cdots C_n$ in terms of two-qubit TEoA inherent there. For the case when one-qubit subsystem $B$ is
independent from other qubits (that is, $\rho_{AB{\bf C}}=\rho_{A{\bf C}}\otimes \rho_B$),
Inequality~(\ref{eq: genpoly2}) reduces to the Tsallis polygamy inequality in (\ref{Tpoly}). In other words, Corollary~\ref{Cor: genpoly2} shows a generalized polygamy
relation of multi-qubit entanglement in terms of TEoA.


\section{Conclusion}
\label{Conclusion}

We have provided generalized entanglement constraints in multi-qubit systems in terms of Tsallis-$q$ entanglement and TEoA.
We have shown that the CKW-type monogamy inequality of multi-qubit entanglement can have a generalized form
in terms of Tsallis-$q$ entanglement and TEoA for $q=2$ or $3$. This generalized monogamy inequality encapsulates multi-qubit CKW-type
inequality as a special case. We have further shown a generalized polygamy inequality of multi-qubit entanglement in terms of TEoA for
$1 \leq q \leq 2$ or $3 \leq q \leq 4$, which also contains multi-qubit polygamy inequality as a special case.

Whereas entanglement in bipartite quantum systems has been intensively studied with rich understanding, the situation becomes
far more difficult for the case of multi-party quantum systems, and very few are known for its characterization and quantification.
MoE is a fundamental property of multi-party quantum entanglement, which also provides various applications in quantum information theory.
Thus, it is an important and even necessary task to characterize MoE to understand the whole picture of multi-party quantum entanglement.

Although MoE is a typical property of multipartite quantum entanglement,
it is however about the relation of bipartite entanglements among the parties in multipartite systems.
Thus, it is inevitable and crucial to have a proper way of quantifying bipartite
entanglement for a good description of the monogamy nature in multi-party quantum entanglement.

Our result presented here deals with Tsallis-$q$ entropy, a one-parameter class of entropy functions and provide sufficient
conditions on the choice of the parameter $q$ for generalized monogamy and polygamy relations of multi-qubit entanglement.
Noting the importance of the study on multi-party quantum entanglement,
our result provides a useful methodology to understand the monogamy and polygamy nature of multi-party entanglement.

\section*{Acknowledgments}
This research was supported by Basic Science Research Program through the National Research Foundation of Korea(NRF)
funded by the Ministry of Education, Science and Technology(NRF-2014R1A1A2056678).


\end{document}